\newtheorem{defin}{Definition}[section]
\newtheorem{theorem}[defin]{Theorem}
\newtheorem{proposition}[defin]{Proposition}
\newtheorem{lemma}[defin]{Lemma}
\newtheorem{example}[defin]{Example}
\newcommand{\R}{\mathbb{R}}
\newcommand{\Z}{\mathbb{Z}}
\DeclareMathOperator{\supp}{supp}
\DeclareMathOperator{\Vor}{Vor}
\newcommand{\optprob}[1]{{\arraycolsep=0pt%
  \begin{array}{r@{\ }l@{\quad}l}
    #1
  \end{array}}}
\begin{document}

\title{A polynomial time algorithm for solving the closest vector
  problem in zonotopal lattices}

\author{S. Thomas McCormick}
\address{S.T. McCormick, Sauder School
  of Business, University of British Columbia, Vancouver, Canada}
\email{tom.mccormick@sauder.ubc.ca}

\author{Britta Peis}
\address{B. Peis, School of Business and
  Economics, RWTH Aachen, 52072 Aachen, Germany}
\email{britta.peis@oms.rwth-aachen.de}

\author{Robert Scheidweiler}
\address{R. Scheidweiler, Paderborn, Germany}
\email{scheidweiler@math2.rwth-aachen.de}

\author{Frank Vallentin}
\address{F. Vallentin, Department Mathematik/Informatik, Universit\"at
  zu K\"oln, Weyertal 86--90, 50931 K\"oln, Germany}
\email{frank.vallentin@uni-koeln.de}

\date{June 18, 2021}

\subjclass{68Q25, 52C07}

\keywords{closest vector problem, lattice, Voronoi cell, zonotope,
  totally unimodular matrix, minimum mean cycle canceling}

\begin{abstract}
  In this note we give a polynomial time algorithm for solving the
  closest vector problem in the class of zonotopal lattices. The
  Voronoi cell of a zonotopal lattice is a zonotope, i.e.\ a projection of a
  regular cube. Examples of zonotopal lattices include lattices of
  Voronoi's first kind and tensor products of root lattices of type
  $\mathsf{A}$.  The combinatorial structure of zonotopal lattices can
  be described by regular matroids/totally unimodular matrices. We
  observe that a linear algebra version of the minimum mean cycle
  canceling method can be applied for efficiently solving the closest
  vector problem in a zonotopal lattice if the lattice is given as the
  integral kernel of a totally unimodular matrix.
\end{abstract}

\maketitle

\markboth{S.T.~McCormick, B.~Peis, R.~Scheidweiler, F.~Vallentin}{A
  polynomial time algorithm for solving CVP in zonotopal lattices}

\section{Introduction}

A \emph{lattice} $L$ of rank $r$ is a discrete subgroup of
$(\mathbb{R}^m, +)$ which spans a linear subspace of
dimension~$r$. One can specify a lattice by a \emph{lattice
  basis}; these are $r$ linearly independent vectors
$b_1, \ldots, b_r \in L$ so that $L$ is given by all their integral
linear combinations:
\[
L = \left\{ \sum_{i=1}^r \alpha_i b_i : \alpha_1, \ldots, \alpha_r \in
\mathbb{Z}\right\}.
\]

The central computational problems for lattices are the shortest
vector problem (SVP) and the closest vector problem (CVP). They have
many applications in mathematics, computer science, and engineering,
in particular in complexity theory, cryptography, information theory,
mathematical optimization, and the geometry of numbers; see for
instance \cite{Nguyen2009a}.

Solving the shortest vector problem amounts to finding a shortest
nonzero vector in a given lattice. In this paper we are concerned with
the \emph{closest vector problem} (CVP): Given a lattice
basis of $b_1, \ldots, b_r$ of $L$ and given a target vector
$t \in \mathbb{R}^m$ find a lattice vector $u \in L$ which is closest
to $t$, i.e.\
\[
\text{determine } u \in L \text{ with } \|u - t\| = \min_{v \in L} \|v - t\|,
\]
where $\|x\|$ denotes the standard Euclidean norm of a vector
$x \in \mathbb{R}^m$.  Without loss of generality, after performing an
orthogonal projection, we may assume that the target vector $t$ lies
in the span of $L$, which we denote by $\mathcal{L}$.

One can interpret CVP geometrically via the \emph{Voronoi cell} of the
lattice $L$ which is defined as
\[
\mathcal{V}(L) = \{x \in \mathcal{L} : \|x \| \leq \|v - x\| \text{ for
  all } v \in L\}.
\]
The Voronoi cell of $L$ is a polytope which tessellates the space
$\mathcal{L}$ by lattice translates $v + \mathcal{V}(L)$ for
$v \in L$. Now the CVP asks for a lattice vector $u$ so that the
target vector $t$ lies in $u + \mathcal{V}(L)$.

\medskip

In the past the closest vector problem has been studied
intensively. Here we only discuss results on algorithms and
complexity which are most relevant for us. We refer to
\cite{Aggarwal2015a} for an up-to-date discussion of the computational
complexity of CVP.

Van Emde Boas \cite{vanEmdeBoas1981a} established the
$\mathrm{NP}$-hardness of exactly solving CVP. Dinur, Kindler, Raz,
Safra \cite{Dinur2003a} showed that approximating CVP within a factor
of $r^{c/\log \log r}$, for some positive constant $c$, is
$\mathrm{NP}$-hard as well. Aharonov and Regev \cite{Aharonov2005a}
showed that approximating CVP within a factor of $\sqrt{r}$ lies in
$\mathrm{NP} \cap \mathrm{coNP}$.

On the algorithmic side, Micciancio and Voulgaris
\cite{Micciancio2013a} developed a deterministic algorithm for exactly
solving CVP which runs in $\tilde{O}(2^{2r})$ time and needs
$\tilde{O}(2^r)$ space. This was improved by Aggarwal, Dadush, and
Stephens-Davidowitz \cite{Aggarwal2015a} who achieved a
$2^{r+o(r)}$-time and space randomized algorithm. Hunkenschr\"oder,
Reuland, Schymura \cite{Hunkenschroeder2019a} considered the
possibility to improve the space complexity of the algorithm by
Micciancio and Voulgaris if one has a compact representation of the
lattice' Voronoi cell.

\medskip

In this note we are concerned with the polynomial time solvability of
the closest vector problem restricted to a special class of lattices.

\medskip

That CVP can be solved in polynomial time for special classes of
lattices has been proved in the case of lattices of Voronoi's first
kind by McKilliam, Grant, and Clarkson \cite{McKilliam2014a} and in
the case of tensor products $\mathsf{A}_m \otimes \mathsf{A}_n$ of
root lattices of type $\mathsf{A}$ by Ducas and van Woerden
\cite{Ducas2018a}.

The main result of this paper unifies and extends these two cases. For
this we consider lattices whose Voronoi cell is a zonotope. Zonotopes
are defined as projections of cubes; all of their faces (of any
dimension) are centrally symmetric. All lattices up to dimension three
have a zonotope as Voronoi cell, but starting from dimension four on,
there are lattices which do not have this property, for example the
root lattice $\mathsf{D}_4$ whose Voronoi cell is the
$24$-cell. Indeed, the three-dimensional facets of the $24$-cell are
regular octahedra and their two-dimensional faces are regular triangles
and thus are not centrally symmetric.

We show that one can exactly solve CVP for zonotopal lattices in
polynomial time using an algorithm of Karzanov and McCormick
\cite{Karzanov1997a}. Their algorithm can be seen as a linear algebra
version of the minimum mean cycle canceling algorithm of Goldberg and
Tarjan \cite{Goldberg1989a} for finding a minimum-cost circulation in
a network.

The set up is a follows: A totally unimodular matrix
$M \in \{-1,0,+1\}^{n \times m}$, i.e.\ every minor of $M$ is either
equal to $-1$, $0$, or $1$, is given. We consider the lattice $L$ of
all integer points lying in the kernel of $M$; the matrix $M$ will be
part of the algorithm's input. Furthermore, a \textit{separable}
convex objective function $w : \mathbb{R}^m \to \mathbb{R}$ is
given. Separability means that for every $i \in [m]$ we have a convex
function $w_i : \mathbb{R} \to \mathbb{R}$ so that
\[
  w(x) = \sum_{i=1}^m w_i(x_i) \quad \text{for} \quad (x_1, \ldots,
  x_m) \in \mathbb{R}^m.
\]
Then, under some technical conditions on the separable convex
objective function $w$, one can compute in
polynomial time a lattice vector $v \in L$ so that $w(v)$ is as small
as possible.

Since the work of Coxeter \cite{Coxeter1962a}, Shephard
\cite{Shephard1974a} and McMullen \cite{McMullen1975a} it is known
that the combinatorial structure of zonotopes which tile space by
translations is determined by a regular matroid and thus it is related to
totally unimodular matrices.

Zonotopal lattices are defined in Section~\ref{sec:zonotopal} and the
relation to regular matroids is reviewed. We develop the theory in
such a way that the separability of the objective function which
solves the CVP in this setting becomes apparent. We show that lattices
of Voronoi's first kind and that tensor product lattices
$\mathsf{A}_n \otimes \mathsf{A}_m$ are zonotopal lattices.

In Section~\ref{sec:algorithm} we discuss the algorithm of Karzanov
and McCormick. We cast the CVP for zonotopal lattices into a separable
convex optimization problem and verify that the technical conditions on the
separable convex objective function are fulfilled to ensure the
polynomial time solvability.

\section{Zonotopal lattices}
\label{sec:zonotopal}

In this section we collect basic definitions and facts about zonotopal
lattices. Zonotopal lattices were first defined by
Gerritzen~\cite{Gerritzen1982a} when he gave a metric to Tutte's
regular chain groups (see for example Tutte~\cite{Tutte1971a}). The
theory of zonotopal lattices was further developed by
Loesch~\cite{Loesch1990a} and Vallentin~\cite{Vallentin2000a},
\cite{Vallentin2003a}, \cite{Vallentin2004a}.

Space tiling zonotopes have been thoroughly investigated in the
literature: Main examples of zonotopal lattices are the lattice of
integral flows and the lattice of integral cuts on a finite graph
which were considered by Bacher, de la Harpe, Nagnibeda
\cite{Bacher1997a}. Lattices whose Voronoi cell are zonotopes can be
dually interpreted by Delone subdivisions and hyperplane arrangements;
this has been done by Erdahl and Ryshkov \cite{Erdahl1994a} who
developed the theory of lattice dicings for this. Zonotopes which tile
space by translations were studied by Coxeter \cite{Coxeter1962a},
Jaeger \cite{Jaeger1983a}, Shephard \cite{Shephard1974a}, and McMullen
\cite{McMullen1975a}, see also \cite{Bjoerner1993a}.

\subsection{Combinatorics: Regular chain groups, regular matroids,
  totally unimodular matrices}
\label{ssec:combinatorics}

We start by briefly recalling fundamental definitions and results of
Tutte's theory of regular chain groups. Chain groups are defined over
general integral domains $R$ (commutative rings with a unit
element and no divisors of zero). In this paper we only need
$R = \mathbb{R}$ or $R = \mathbb{Z}$. So we sometimes simplify Tutte's
original notation. Regular chain groups are closely related to regular
matroids and totally unimodular matrices. We refer, for example, to
Camion \cite{Camion2006a}, Oxley \cite{Oxley2011a}, Schrijver
\cite{Schrijver1986a}, Tutte \cite{Tutte1965a}, \cite{Tutte1971a},
Welsh \cite{Welsh1976a} for proofs and more details.

\smallskip

Let $\mathcal{L}$ be a subspace of $\mathbb{R}^m$. The \emph{support}
of a vector $x = (x_1, \ldots, x_m) \in \mathcal{L}$ is given by
\[
\supp x = \{i  \in [m]: x_i \neq 0\} \quad \text{with} \quad [m] =
\{1, \ldots, m\}.
\]
A non-zero vector $x \in \mathcal{L}$ is called an \emph{elementary chain} if
it has minimal (inclusion-wise) support among all non-zero vectors in
$\mathcal{L}$.  An elementary chain $x$ is called a \emph{primitive chain} if
$x_i \in \{-1,0,+1\}$ for all $i \in [m]$. A subspace $\mathcal{L}$ is
called \emph{regular} if every elementary chain is a multiple of a
primitive chain.

The set of supports of elementary chains in a regular subspace forms
the circuits of a regular matroid, a matroid which is representable
over every field.  If a matrix $M$ is totally unimodular, then the kernel of
$M$ is a regular subspace. Conversely, every regular subspace can be
represented as kernel of a totally unimodular matrix.

\smallskip

The orthogonal complement of a regular subspace $\mathcal{L}$ which is
defined by
\[
\mathcal{L}^\perp = \left\{y \in \mathbb{R}^m : \sum_{i = 1}^m x_i y_i = 0
\text{ for all } x \in \mathcal{L}\right\}
\]
is again regular.

Let $S$ be a subset of $[m]$. We define the \emph{deletion}
$\mathcal{L} \setminus S$ by
\[
  \mathcal{L} \setminus S = \left\{(x_i)_{i \not\in S} : x = (x_1,
    \ldots, x_m) \in \mathcal{L}, \; \supp x \cap S = \emptyset\right\},
\]
and the \emph{contraction} $\mathcal{L} / S$ by
\[
  \mathcal{L} / S = \left\{(x_i)_{i \not\in S} : x = (x_1, \ldots,
    x_m) \in \mathcal{L}\right\}.
\]
Both operations preserve regularity. We say that a subspace is a
\emph{minor} of $\mathcal{L}$ if it is obtained from $\mathcal{L}$ by
a sequence of deletions and contractions.

\smallskip

Two main examples of regular subspaces come from directed graphs. Let
$D = (V, A)$ be an acyclic, directed graph with vertex set $V$ and arc
set $A$. By $M(D) \in \{-1,0,+1\}^{V \times A}$ we denote the
vertex-arc incidence matrix of $D$ which is a totally unimodular
matrix. Define the regular subspace $\mathcal{L}(D)$ as the kernel of
$M$:
\[
\mathcal{L}(D) = \left\{x \in \mathbb{R}^A : M(D)x = 0\right\}.
\]
The primitive chains of $\mathcal{L}(D)$ correspond to the simple
circuits/cycles (forward and backward arcs are allowed) of the directed graph
$D$. Regular subspaces which can be realized by this construction are
called \emph{graphic}.  The primitive chains of the orthogonal complement
$\mathcal{L}(D)^\perp$ correspond to the simple cuts/bonds (forward and
backward arcs are allowed) of $D$. Such a regular subspace is called
\emph{cographic}. Minors of graphic (resp.\ cographic) subspaces are
graphic (resp.\ cographic). The dimension of $\mathcal{L}(D)$ equals
$|A| - |V| + k$ where $k$ is the number of connected components of the
underlying undirected graph and the dimension of
$\mathcal{L}(D)^\perp$ is $|V|-k$.

\smallskip

Tutte \cite{Tutte1958ab} gave a characterization of graphic and
cographic subspaces in terms of forbidden minors. For this let
$\mathsf{K}_m$ be the complete graph on $m$ vertices and let
$\mathsf{K}_{m,n}$ the complete bipartite graph where one partition
has $m$ vertices and the other one has $n$ vertices. Tutte showed that
a regular subspace is graphic if and only if it contains neither
$\mathcal{L}(\mathsf{K}_5)^\perp$ nor
$\mathcal{L}(\mathsf{K}_{3,3})^\perp$ as minors. Dually, a regular
subspace is cographic if and only if it contains neither
$\mathcal{L}(\mathsf{K}_5)$ nor $\mathcal{L}(\mathsf{K}_{3,3})$ as
minors. The central structure theorem about regular subspaces is
Seymour's decomposition theorem \cite{Seymour1980a}: One may construct
every regular subspace as $1$-, $2$-, and $3$-sums of regular
subspaces starting from graphic, or cographic subspaces, 
or the special regular subspace called
$\mathsf{R}_{10} \subseteq \mathbb{R}^{10}$; see also Truemper
\cite{Truemper1992a}.

\subsection{Geometry: Strict Voronoi vectors, Voronoi cells}

A regular subspace $\mathcal{L}$ comes together with a \emph{regular
  lattice} $L = \mathcal{L} \cap \mathbb{Z}^m$. One can show, see
\cite[Chapter 1.2]{Tutte1971a}, that in a regular lattice every vector
$v \in L$ is a \emph{conformal sum} of primitive chains
$w_1, \ldots, w_s \in L$:
\begin{equation}
\label{eq:conformal}
v = w_1 + \cdots + w_s \; \text{ with } \; (w_j)_i (w_k)_i \geq 0 \; \text{ for all } \; i \in [m] \; \text{ and } \; j, k \in [s].
\end{equation}
When $\mathcal{L} \subseteq \mathbb{R}^m$ is a graphic (cographic)
subspace we call the associated lattice
$L = \mathcal{L} \cap \mathbb{Z}^m$ \emph{graphic} (\emph{cographic})
as well. The graphic lattices are the lattices of integral flows and
the cographic lattices are the lattices of integral cuts in the
framework of~\cite{Bacher1997a}.

We equip the space $\mathbb{R}^m$ with an inner product which is
defined by giving positive weights on the set $[m]$: For a positive
vector $g \in \mathbb{R}^m_{>0}$ define the inner product
\[
(x,y)_g = \sum_{i = 1}^m g_i x_i y_i.
\]
The standard basis vectors $e_1, \ldots, e_m$ form in this way an
orthogonal basis which does not need to be orthonormal.  A regular lattice $L$
with inner product $(\cdot, \cdot)_g$ is called a \emph{zonotopal
  lattice}. As we explain below, this terminology refers to the fact
that the Voronoi cell of a zonotopal lattice is a zonotope. The
\emph{Voronoi cell} of $L$ is
\[
\mathcal{V}(L) = \{x \in \mathcal{L} : (x,x)_g \leq (v-x, v-x)_g \text{ for all }
v \in L\},
\]
which is a centrally symmetric polytope. Lattice vectors $v \in L$
which determine a facet defining hyperplane
\[
H_v = \{x \in \mathcal{L} : (x,x)_g = (v-x,v-x)_g\} = \left\{x \in
  \mathcal{L} : (x,v)_g = \frac{1}{2} (v,v)_g\right\}
\]
of $\mathcal{V}(L)$ are called \emph{strict Voronoi vectors}
(sometimes also called \emph{relevant vectors}). We denote the set of all strict
Voronoi vectors by $\Vor(L)$.

Voronoi showed (see for example \cite[Chapter~21,
Theorem~10]{Conway1988a} or \cite{Conway1997a}), for arbitrary
lattices $L$, that a nonzero vector $v \in L$ is a strict Voronoi
vector if and only if $\pm v$ are the only shortest vectors in
$v + 2L$.

\medskip

In the following let $\mathcal{L} \subseteq \mathbb{R}^m$ be a regular
subspace and let $L = \mathcal{L} \cap \mathbb{Z}^m$ be the
corresponding regular lattice with positive vector
$g \in \mathbb{R}^m_{>0}$. Essentially, the arguments given below can
also be found in \cite{Dutour2009a} in the special case of cographic
lattices with constant $g$.

\smallskip

Applying Voronoi's characterization to $L$ yields:

\begin{proposition}
\label{strict Voronoi}
A lattice vector of $L$ is a primitive chain if and only if it is a strict Voronoi vector of $L$.
\end{proposition}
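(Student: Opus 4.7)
The plan is to apply Voronoi's characterization stated just before the proposition: a nonzero $v \in L$ is a strict Voronoi vector iff $\pm v$ are the only shortest vectors in the coset $v + 2L$. So I want to show that $v$ being a primitive chain is equivalent to $\pm v$ being the unique minimizers of $(\cdot,\cdot)_g$ on $v + 2L$.

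For the forward direction, suppose $v$ is a primitive chain, so $v_i \in \{-1,0,+1\}$ and $v$ has inclusion-minimal support. For any $u \in v + 2L$, each coordinate $u_i$ has the same parity as $v_i$. On $\supp v$, $u_i$ is odd, so $u_i^2 \geq 1 = v_i^2$; off $\supp v$, $u_i$ is even, so $u_i^2 \geq 0 = v_i^2$. Summing with the positive weights $g_i$ gives $(u,u)_g \geq (v,v)_g$. In the equality case, $u_i \in \{\pm 1\}$ on $\supp v$ and $u_i = 0$ elsewhere, so $w := (u-v)/2 \in L$ has $\supp w \subseteq \supp v$ and entries in $\{-1,0,+1\}$. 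If $w \neq 0$, then minimality of $\supp v$ forces $\supp w = \supp v$, and the standard fact that two elementary chains sharing the same support are scalar multiples (obtained by cancelling a coordinate) yields $w = \pm v$. Checking $u = v + 2w$ against the constraint $|u_i| = 1$ on $\supp v$ rules out $w = v$, leaving $u = v$ or $u = -v$.

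For the reverse direction, suppose $v$ is a strict Voronoi vector; I argue by contradiction that $v$ must be a primitive chain. If $v$ is not primitive, then in the conformal decomposition~\eqref{eq:conformal} of $v$ one has $s \geq 2$: either $v$ fails to be elementary, in which case $s = 1$ would make $v$ itself a primitive chain and contradict this, or $v$ is a nonzero integer multiple $k v'$ of a primitive chain with $|k| \geq 2$, and the decomposition $v = v' + \cdots + v'$ is conformal with $s = |k| \geq 2$. Now set $a := w_1$, $b := w_2 + \cdots + w_s$, and $u := -a + b \in L$; since $v - u = 2 a \in 2L$, we have $u \in v + 2 L$. A direct expansion gives
\[
(v,v)_g - (u,u)_g = 4(a,b)_g = 4 \sum_{i=1}^m g_i (w_1)_i \bigl((w_2)_i + \cdots + (w_s)_i\bigr),
\]
and each summand is nonnegative by the conformality condition $(w_j)_i (w_k)_i \geq 0$. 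Hence $(u,u)_g \leq (v,v)_g$, and the strict Voronoi hypothesis then forces $u \in \{\pm v\}$. The case $u = v$ gives $a = 0$, i.e.\ $w_1 = 0$; the case $u = -v$ gives $b = 0$, whose conformal summands must all vanish, so $w_2 = \cdots = w_s = 0$. Either conclusion contradicts the nonvanishing of the primitive chains $w_j$ in the decomposition with $s \geq 2$.

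The main obstacle is handling the equality case in the forward direction: reducing it to the statement that two elementary chains with identical support are proportional, and then excluding the spurious candidate $u = 3v$ by reusing the $\{-1,0,+1\}$-constraint. The reverse direction is essentially a one-line trick once the conformal decomposition is in hand, with conformality playing the role that a non-cancellation property would play in the analogous argument for flows or cuts on a graph.
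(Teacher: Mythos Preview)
Your proof is correct and follows essentially the same route as the paper's: Voronoi's coset characterization, a parity/support argument for the forward direction culminating in the fact that two elementary chains with identical support are proportional, and the conformal decomposition with $u = v - 2w_1$ for the converse. Your write-up is somewhat more explicit in the equality analysis and in disposing of the cases $u = \pm v$, and your case split for showing $s \geq 2$ is unnecessary (any conformal decomposition with $s = 1$ already forces $v$ to be primitive), but none of this deviates from the paper's argument in substance.
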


\begin{proof}
  Let $v \in L$ be a primitive chain and let $u \in v + 2L$ be a
  lattice vector with $u \neq \pm v$.  We have
  $v - u \in 2L \subseteq 2\Z^n$ and $v_i \in \{-1,0,+1\}$, for all
  $i \in [m]$, which shows $\supp v \subseteq \supp u$.  If
  $\supp v \neq \supp u$, then $(v,v)_g < (u,u)_g$.  If
  $\supp v = \supp u$, then there exists a factor
  $\alpha \in \Z \setminus \{-1,+1\}$ so that $u = \alpha v$, hence
  $(v,v)_g < (u,u)_g$.  In both cases $\pm v$ are the only shortest
  vectors in $v + 2L$.  Hence, $v$ is a strict Voronoi vector.

  Conversely, let $v \in L$ be a strict Voronoi vector. Write $v$ as a
  conformal sum of primitive chains as in~\eqref{eq:conformal}. Set
  $u = v - 2w_1$. Then
\[
(u,u)_g = (v,v)_g - 4 (v - w_1, w_1)_g \leq (v,v)_g,
\]
since $(v - w_1, w_1)_g \geq 0$ by~\eqref{eq:conformal}. Hence,
$\pm v$ is the unique shortest vector in the coset $v + 2L$ if and
only if $s = 1$.
\end{proof}

The following special case of Farkas lemma is proved e.g.\ in
\cite[Theorem 22.6]{Rockafellar1970a}.

\begin{lemma}
\label{farkas}
Let $x \in \R^m$ be a vector, and let
$\alpha_1, \ldots, \alpha_m \in \R \cup \{\pm\infty\}$.  Exactly one
of the following two alternatives holds:
\begin{enumerate}
\item There exists a vector $y'$ with $(y',z)_g = 0$ for all
  $z \in \mathcal{L}$ so that
\[
y' \in x + \prod_{i=1}^m [-\alpha_i, \alpha_i].
\]
\item There exists a vector $y \in \mathcal{L}$ such that
\[
\text{for all } \; z \in x + \prod_{i=1}^m [-\alpha_i, \alpha_i] \; \text{ we have } \; (y,z)_g > 0.
\]
\end{enumerate}
If the second condition holds, then one can choose $y$ to be
a primitive chain of $L$.
\end{lemma}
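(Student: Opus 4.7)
The plan is to reduce the statement to the standard Farkas lemma (Rockafellar's Theorem~22.6) applied to the $g$-orthogonal complement of $\mathcal{L}$ and the box $B = x + \prod_{i=1}^m [-\alpha_i, \alpha_i]$, and then upgrade the resulting certificate to a primitive chain via a conformal decomposition. Mutual exclusivity of the two alternatives is immediate: if $y'$ and $y$ simultaneously witness (1) and (2), setting $z = y'$ in (2) gives $(y, y')_g > 0$, whereas $y \in \mathcal{L}$ together with the $g$-orthogonality of $y'$ to $\mathcal{L}$ force $(y, y')_g = 0$, a contradiction.

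For existence, I would suppose (1) fails, so that the linear subspace $W = \{w \in \R^m : (w, z)_g = 0 \text{ for all } z \in \mathcal{L}\}$ is disjoint from the closed convex box $B$. Rockafellar's theorem then supplies a nonzero $y$ that strictly separates $W$ from $B$ via $(y, \cdot)_g$. Because $W$ is a linear subspace, boundedness of $(y, \cdot)_g$ on $W$ forces $(y, w)_g = 0$ for every $w \in W$; equivalently $y \in \mathcal{L}$, and $\sup_{w \in W}(y, w)_g = 0$, so $\min_{z \in B}(y, z)_g > 0$. A coordinatewise computation gives
\[
\min_{z \in B}(y, z)_g = (y, x)_g - \sum_{i=1}^m g_i |y_i| \alpha_i,
\]
with the convention $0 \cdot \infty = 0$; strict positivity forces $y_i = 0$ whenever $\alpha_i = \infty$, and delivers alternative (2).

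Finally, to sharpen $y$ into a primitive chain, I would decompose $y = w_1 + \cdots + w_s$ conformally as in~\eqref{eq:conformal}, using regularity of $L$. Conformality means that for each $i$ the signs of the $(w_k)_i$ all agree with that of $y_i$, so $|y_i| = \sum_k |(w_k)_i|$ and $(y, x)_g = \sum_k (w_k, x)_g$; therefore
\[
\min_{z \in B}(y, z)_g \;=\; \sum_{k=1}^s \min_{z \in B}(w_k, z)_g.
\]
Since the left-hand side is strictly positive, at least one summand on the right is positive, producing a primitive chain that by itself satisfies (2). The main bookkeeping point, and the only step that requires care, is the coordinates with $\alpha_i = \infty$: there $y_i = 0$ and conformality forces every $(w_k)_i$ to vanish (same-sign summands of zero must each vanish), so the convention $0 \cdot \infty = 0$ remains coherent throughout. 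All other steps are routine applications of convex separation and the structure of regular chain groups.
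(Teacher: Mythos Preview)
The paper does not actually prove this lemma; it simply cites Rockafellar's Theorem~22.6, which already contains the full statement (including the refinement to elementary vectors, i.e.\ primitive chains). Your argument is essentially the standard unpacking of that result, and it is sound.

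One point should be tightened: the separating vector $y$ you obtain lies in $\mathcal{L}$, not necessarily in $L=\mathcal{L}\cap\Z^m$, so~\eqref{eq:conformal} does not apply verbatim. Either invoke the real version of conformal decomposition, $y=\sum_k \lambda_k w_k$ with $\lambda_k>0$ and $w_k$ primitive chains conforming to $y$ (the $w_k$ span the extreme rays of the pointed cone $\{z\in\mathcal{L}: z_i y_i\ge 0,\ z_i=0 \text{ if } y_i=0\}$, and by regularity these are scalar multiples of primitive chains), or observe that the strict separation inequality is open on the rational subspace $\mathcal{L}\cap\{z:z_i=0\text{ whenever }\alpha_i=\infty\}$, so $y$ can be perturbed to a rational vector there and scaled into $L$. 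With either fix your additivity identity for $\min_{z\in B}(\cdot,z)_g$ and the remainder of the argument go through unchanged (carrying the $\lambda_k$'s along in the first option).
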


\begin{theorem}
\label{projection}
Let $\pi_g : \mathbb{R}^m \to \mathcal{L}$ be the orthogonal
projection of $\R^m$ onto $\mathcal{L}$. Then,
$\mathcal{V}(L) = \pi_g([-1/2,1/2]^m)$.
\end{theorem}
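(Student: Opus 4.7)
The plan is to prove the two set inclusions separately, using the Voronoi–cell characterization coming from Proposition~\ref{strict Voronoi} for one direction and Lemma~\ref{farkas} for the other. Recall that by the standard description of the Voronoi cell via facet-defining hyperplanes,
\[
  \mathcal{V}(L) = \bigl\{x \in \mathcal{L} : (v,x)_g \le \tfrac{1}{2}(v,v)_g \ \text{for all } v \in \Vor(L)\bigr\},
\]
and by Proposition~\ref{strict Voronoi} the set $\Vor(L)$ equals the set of primitive chains of $L$.

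For the inclusion $\pi_g([-1/2,1/2]^m) \subseteq \mathcal{V}(L)$, I would take $y \in [-1/2,1/2]^m$, set $x = \pi_g(y) \in \mathcal{L}$, and observe that $y - x \perp_g \mathcal{L}$, so $(v,x)_g = (v,y)_g$ for every $v \in \mathcal{L}$. For a primitive chain $v$, every coordinate satisfies $v_i \in \{-1,0,+1\}$, hence $|v_i y_i| \le \tfrac{1}{2}|v_i| = \tfrac{1}{2}v_i^2$, and summing with weights $g_i$ gives $(v,y)_g \le \tfrac{1}{2}(v,v)_g$. Thus $x$ lies in all the Voronoi half-spaces, proving $x \in \mathcal{V}(L)$.

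For the reverse inclusion, I would use Lemma~\ref{farkas} with $-x$ playing the role of ``$x$'' and $\alpha_i = 1/2$ for all $i$. Alternative~(1) of the lemma asserts the existence of $y' \in \mathcal{L}^\perp$ (with respect to $(\cdot,\cdot)_g$) with $y' \in -x + [-1/2,1/2]^m$; setting $y = y' + x$ yields $y \in [-1/2,1/2]^m$ and $y - x \in \mathcal{L}^\perp$, so $\pi_g(y) = x$, which is precisely what we need. It therefore suffices to rule out alternative~(2) whenever $x \in \mathcal{V}(L)$.

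Alternative~(2), strengthened by the final sentence of the lemma, gives a primitive chain $y \in L$ such that $\inf_{z \in -x + [-1/2,1/2]^m}(y,z)_g > 0$. A coordinate-wise minimization shows that this infimum equals $-(y,x)_g - \tfrac{1}{2}\sum_{i=1}^m g_i|y_i|$, and since $y$ is a primitive chain, $\sum_{i}g_i|y_i| = \sum_{i} g_i y_i^2 = (y,y)_g$. Setting $v = -y$ (which is again a primitive chain, hence a strict Voronoi vector by Proposition~\ref{strict Voronoi}), the inequality rewrites as $(v,x)_g > \tfrac{1}{2}(v,v)_g$, contradicting $x \in \mathcal{V}(L)$. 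The main obstacle is the bookkeeping in this Farkas application, in particular the shift by $-x$ and tracking the correct sign convention so that the primitive chain produced in the infeasible case really violates a Voronoi inequality.
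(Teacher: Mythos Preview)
Your proof is correct and follows essentially the same route as the paper's: the inclusion $\pi_g([-1/2,1/2]^m)\subseteq\mathcal{V}(L)$ via the coordinate bound $|v_iy_i|\le\tfrac12 v_i^2$, and the reverse inclusion by applying Lemma~\ref{farkas} to $-x$ with $\alpha_i=\tfrac12$ and deriving a violated Voronoi inequality from alternative~(2). The only cosmetic difference is that the paper phrases the contradiction as $-x\notin\mathcal{V}(L)$ and then invokes central symmetry, whereas you negate the primitive chain and contradict $x\in\mathcal{V}(L)$ directly.
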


\begin{proof}
  For a vector $x \in [-1/2,1/2]^m$ inequality
  $(x,v)_g \leq \frac{1}{2} (v,v)_g$ holds for all
  $v \in \Z^m \setminus \{0\}$. Decompose $x$ orthogonally
  $x = y + y'$ with $y = \pi_g(x) \in \mathcal{L}$. For all
  $v \in L \setminus \{0\}$ we have
\[
(y,v)_g= (x,v)_g - (y',v)_g = (x,v)_g \leq
\frac{1}{2} (v,v)_g.
\]
Thus, $\pi_g(x) \in \mathcal{V}(L)$.

Let $x \in \mathcal{V}(L)$ be a vector of the Voronoi cell. If there
exists $y'$ with $(y',z)_g = 0$ for all $z \in \mathcal{L}$ so that
$y' \in -x + [-1/2, 1/2]^m$, then $x + y' \in [-1/2,1/2]^m$ and
$\pi_g(x + y') = \pi_g(x) = x$. Suppose that such a vector $y'$ does
not exist. Then by Lemma~\ref{farkas} there is a primitive chain $v \in L$ so that
\[
(v, -x + [-1/2,1/2]^m)_g > 0.
\]
This implies $(v,-x - \frac{1}{2} v)_g > 0$. Hence,
$-x \not\in \mathcal{V}(L)$; a contradiction because $\mathcal{V}(L)$
is centrally symmetric.
\end{proof}

This theorem proves that the Voronoi cell of a zonotopal lattice is
indeed a zonotope. The operations deleting or contracting correspond
to contracting the corresponding zones or projecting along the
corresponding zones of $\mathcal{V}(L)$, as mentioned in
\cite[Proposition 2.2.6]{Bjoerner1993a}. Also the combinatorial
structure of $\mathcal{V}(L)$, which is independent of $g$, is
completely encoded in the covectors of the oriented matroid defined by
$\mathcal{L}$, see \cite[Proposition 2.2.2]{Bjoerner1993a}.

Conversely, Erdahl \cite{Erdahl1999a}, see also \cite{Vallentin2000a},
\cite{Vallentin2004a} for an alternative proof, showed that every
zonotope which tiles spaces by translates is the affine linear image
of the Voronoi cell of a zonotopal lattice.

\subsection{Example: Lattices of Voronoi's first kind}

McKilliam, Grant, and Clarkson \cite{McKilliam2014a} gave a polynomial
time algorithm for solving the closest vector problem for lattices of
Voronoi's first kind (with known obtuse superbasis, see below). Now we
show that these lattices correspond to cographic lattices.

Following Conway and Sloane \cite{Conway1992a} we say that a lattice
$L$ is of \textit{Voronoi's first kind} if $L$ has an \textit{obtuse
  superbasis}: These are $n+1$ vectors $b_0, b_1, \ldots, b_n$ so that
the following three conditions hold:
\begin{enumerate}
\item[(i)] $b_1, \ldots, b_n$ is a basis of $L$,
\item[(ii)] $b_0 + b_1 + \cdots + b_n = 0$,
\item[(iii)] $b_i^{\sf T} b_j \leq 0$ for $i,j = 0, \ldots, n$ and $i \neq j$.
\end{enumerate}
A classical theorem of Voronoi states that every lattice in dimensions
$2$ and $3$ has an obtuse superbasis, see Conway and Sloane
\cite[Section 7]{Conway1992a}. However, starting from dimension
$n = 4$ on, not every lattice is of Voronoi's first kind.

In the setting of zonotopal lattices, lattices of Voronoi's first kind
appear as cographic lattices: Let $L \subseteq \mathbb{R}^n$ be a
lattice of Voronoi's first kind having an obtuse superbasis
$b_0, \ldots, b_n$. Define the directed graph $D = (V, A)$ with vertex
set $V = \{b_0, \ldots, b_n\}$ where we draw an arc $a_{ij}$ between
vertices $b_i$ and $b_j$ whenever $b_i^{\sf T} b_j < 0$ and $i <
j$. We assign to the arc $a_{ij}$ the (positive) weight
$g_{ij} = -b_i^{\sf T} b_j$.

The undirected graph which underlies $D$ is called \textit{Delone
  graph} of $L$, see \cite{Conway1992a}. In fact, the choice of the
directions of the arcs is arbitrary, as long as the graph does not
contain a directed cycle.

\begin{proposition}
The cographic lattices are exactly the lattices of Voronoi's first kind.
\end{proposition}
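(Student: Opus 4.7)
The plan is to establish the equivalence by constructing an explicit isometry in each direction.

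For the direction ``Voronoi's first kind $\Rightarrow$ cographic,'' begin with an obtuse superbasis $b_0,b_1,\ldots,b_n$ of $L$ and form the directed graph $D=(V,A)$ with weights $g_{ij}=-b_i^{\sf T}b_j>0$ exactly as described in the paragraph preceding the statement. Let $r_v\in\{-1,0,+1\}^A$ denote the row of the vertex--arc incidence matrix $M(D)$ indexed by $v\in V$. The key computation is the Gram-matrix identity
\[
  (r_i,r_j)_g = b_i^{\sf T}b_j \qquad \text{for all } i,j\in\{0,1,\ldots,n\}.
\]
For $i\neq j$ only arcs with endpoints $\{i,j\}$ contribute, each with sign $-1$; by construction such an arc is present exactly when $b_i^{\sf T}b_j<0$ and then has weight $-b_i^{\sf T}b_j$. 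For $i=j$, the superbasis relation $\sum_v b_v=0$ yields
\[
  b_i^{\sf T}b_i = -\sum_{j\neq i}b_i^{\sf T}b_j = \sum_{e\ni i}g_e = (r_i,r_i)_g.
\]
Hence $r_v\mapsto b_v$ extends to a linear isometry from $(\mathcal{L}(D)^\perp,(\cdot,\cdot)_g)$ onto $\mathrm{span}(L)$ with its standard inner product. Since $M(D)$ is totally unimodular, the integer vectors in its row span are exactly the $\mathbb{Z}$-linear combinations of the rows $r_v$, so this isometry identifies $\mathcal{L}(D)^\perp\cap\mathbb{Z}^A$ with $\mathbb{Z}b_0+\cdots+\mathbb{Z}b_n=L$.

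For the converse, let $L=\mathcal{L}(D)^\perp\cap\mathbb{Z}^A$ be a cographic lattice with positive weights $g$; we may assume $D$ is connected by treating components separately. Embed $(\mathcal{L}(D)^\perp,(\cdot,\cdot)_g)$ isometrically in $\mathbb{R}^{|V|-1}$ via Cholesky and let $b_v$ be the image of $r_v$. Then $\sum_v b_v=0$ because $\sum_v r_v=0$; the vectors $b_1,\ldots,b_n$ (with $n=|V|-1$) form a basis of $L$ by the same total-unimodularity fact applied to $M(D)$ after deleting one row; and for $i\neq j$
\[
  b_i^{\sf T}b_j = (r_i,r_j)_g = -\sum_{e:\,\text{endpoints}=\{i,j\}} g_e \leq 0,
\]
which is the obtuse condition. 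So $(b_0,\ldots,b_n)$ is an obtuse superbasis.

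The main obstacle is the lattice-generation claim that the integer points in the row span of $M(D)$ coincide with the $\mathbb{Z}$-combinations of the rows $r_v$; this is a standard consequence of total unimodularity via Hermite normal form (see e.g.~\cite{Schrijver1986a}). The reduction to connected graphs, and the harmless possibility that $D$ has parallel arcs (which simply enlarge $|b_i^{\sf T}b_j|$ into a sum over those arcs), are routine bookkeeping.
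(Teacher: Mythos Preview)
Your proof is correct and follows essentially the same route as the paper: match the rows $r_i$ of the vertex--arc incidence matrix to the superbasis vectors $b_i$ via the Gram identity $(r_i,r_j)_g=b_i^{\sf T}b_j$, and invoke total unimodularity to identify the cographic lattice with the $\mathbb{Z}$-span of the rows. The paper differs only in bookkeeping---it proves weak connectedness of $D$ explicitly by a dimension count (which you obtain implicitly from the rank of the Gram matrix), and in the converse it handles disconnected graphs by identifying vertices across components rather than by treating components separately.
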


\begin{proof}
  The graph $D$ is weakly connected (i.e.\ the underlying undirected graph
  is connected) since $L$ has rank $n$: For suppose not. Then one can
  partition the vertex set $V = V_1 \cup V_2$ so that there is no arc
  between $V_1$ and $V_2$. Consider
  the spaces $U_i$ spanned by the vectors in $V_i$, with $i = 1, 2$. These spaces are
orthogonal and we have $\dim U_i = |U_i|$ because of (i)
  and (ii). Hence, $\dim(U_1 + U_2) = n+1$,  contradicting that the
  rank of $L$ is $n$.

  Consider the vertex-arc incidence matrix
  $M(D) \in \{-1,0,+1\}^{V \times A}$ of $D$ and let
  $v_0, \ldots, v_n \in \mathbb{R}^A$ be the row vectors of $M$. Their
  integral span coincides with the cographic lattice
  $L' = \mathcal{L}(D)^\perp \cap \mathbb{Z}^A$. Furthermore, the
  vectors $v_0, \ldots, v_n$ form an obtuse superbasis of $L'$ and
  $(v_i,v_j)_g = -g_{ij} = b_i^{\sf T} b_j$ holds when $v_i$ and $v_j$
  are adjacent in $D$. Hence, the lattice $L$ which is of Voronoi's
  first kind is isometric to the cographic lattice $L'$.

  Clearly, this construction can be reversed. Starting from a
  vertex-arc incidence matrix of a weakly connected acyclic directed
  graph defining a cographic lattice one can get an obtuse superbasis
  of this lattice. If the graph defining the cographic lattice is not
  weakly connected, then one can make it weakly connected by
  identifying vertices of distinct connected components without
  changing the cographic lattice.
\end{proof}

For instance, the root lattice
\[
\mathsf{A}_n = \left\{x \in \mathbb{Z}^{n+1} : \sum_{i=1}^{n+1} x_i = 0\right\}
\]
is a lattice of Voronoi's first kind. Its Delone graph is the cycle
graph $C_{n+1}$ of length $n+1$. The dual lattice $ \mathsf{A}_n^*$ is
again a lattice of Voronoi's first kind. Its Delone graph is the
complete graph $\mathsf{K}_{n+1}$ on $n + 1$ vertices. The Voronoi
cell of $\mathsf{A}_n^*$ is the $n$-dimensional permutahedron.

\subsection{Example: Tensor product of root lattices of type $\mathsf{A}$}

Ducas and van Woerden \cite{Ducas2018a} gave a polynomial time
algorithm for solving the closest vector problem for tensor products
of the form $\mathsf{A}_m \otimes \mathsf{A}_n$.  Now we show that
these lattices correspond to the graphic lattices for the complete
bipartite graph $K_{m+1,n+1}$.

Let $L_1 \subseteq \mathbb{R}^{n_1}$ be a lattice of rank $r_1$ with
basis $a_1, \ldots, a_{r_1}$ and let $L_2 \subseteq \mathbb{R}^{n_2}$
be a lattice of rank $r_2$ with basis $b_1, \ldots, b_{r_2}$. Then
their \emph{tensor product} is the lattice
$L_1 \otimes L_2 \subseteq \mathbb{R}^{n_1 n_2}$ having basis
$a_i \otimes b_j$ with $i = 1, \ldots, r_1$ and $j = 1, \ldots, r_2$.

\begin{proposition}
  The tensor product lattice $\mathsf{A}_m \otimes \mathsf{A}_n$
  coincides with the graphic lattice of the complete bipartite graph
  $\mathsf{K}_{m+1,n+1}$.
\end{proposition}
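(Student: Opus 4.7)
The plan is to identify both lattices with a common concrete model: the lattice $\Lambda$ of integer $(m+1)\times(n+1)$ matrices whose row sums and column sums all vanish. For the graphic side, I would orient every edge of $\mathsf{K}_{m+1,n+1}$ from its first part $U=\{u_0,\ldots,u_m\}$ to its second part $W=\{w_0,\ldots,w_n\}$, obtaining an acyclic digraph whose arc set is naturally indexed by pairs $(i,j)$. A vector $x=(x_{ij})$ lies in $\ker M(\mathsf{K}_{m+1,n+1})$ exactly when $\sum_{j=0}^{n} x_{ij}=0$ for every $i$ and $\sum_{i=0}^{m} x_{ij}=0$ for every $j$, so $\mathcal{L}(\mathsf{K}_{m+1,n+1})\cap\mathbb{Z}^A=\Lambda$. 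For the tensor side, I would take the standard bases $a_i=e_i-e_{i+1}$ of $\mathsf{A}_m$ ($1\le i\le m$) and $b_j=e_j-e_{j+1}$ of $\mathsf{A}_n$ ($1\le j\le n$), so that $\mathsf{A}_m\otimes \mathsf{A}_n$ is the $\mathbb{Z}$-span of the plaquettes $P_{ij}:=a_i\otimes b_j$; viewed as a matrix, each $P_{ij}$ is supported on the single $4$-cycle $u_i w_j u_{i+1} w_{j+1}$ of $\mathsf{K}_{m+1,n+1}$, with entries $+1$ at $(i,j),(i+1,j+1)$ and $-1$ at $(i,j+1),(i+1,j)$. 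Since every plaquette trivially lies in $\Lambda$, this gives $\mathsf{A}_m\otimes \mathsf{A}_n\subseteq \Lambda$.

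To promote the inclusion to equality, I would first match ranks: the dimension of $\mathsf{A}_m\otimes \mathsf{A}_n$ is $mn$, and the dimension of $\mathcal{L}(\mathsf{K}_{m+1,n+1})$ is $|A|-|V|+1=(m+1)(n+1)-(m+n+2)+1=mn$ by the cycle-rank formula recorded earlier in the paper. Then, to see that the index is one, I would produce explicit integer coordinates of an arbitrary $X\in\Lambda$ in the plaquette basis by setting
\[
c_{ij}\ :=\ \sum_{k\le i,\,l\le j} X_{kl}\qquad (1\le i\le m,\ 1\le j\le n),
\]
and checking $X=\sum_{i,j} c_{ij}P_{ij}$. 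A discrete second-difference identity recovers $X_{pq}$ at interior positions $(p,q)$ with $p\le m$ and $q\le n$; the boundary positions $p=m+1$ or $q=n+1$ (and the corner) are recovered because the vanishing row and column sums of $X$ convert the truncated partial sums into the missing entries.

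The only genuine obstacle is bookkeeping in those boundary cases, but once the interior identity is in hand each boundary check collapses immediately to the row- or column-sum condition on $X$. Since every $X\in\Lambda$ then has integer coordinates in $\{P_{ij}\}$, the inclusion is tight, and $\mathsf{A}_m\otimes \mathsf{A}_n$ coincides with the graphic lattice of $\mathsf{K}_{m+1,n+1}$.
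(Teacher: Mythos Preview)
Your argument is correct and follows the same outline as the paper's: both identify the tensor basis vectors $a_i\otimes b_j$ with the elementary $4$-cycles (plaquettes) of $\mathsf{K}_{m+1,n+1}$ and invoke the cycle-rank formula $|A|-|V|+1=(m+1)(n+1)-(m+n+2)+1=mn$ to match ranks. The one substantive difference is that the paper stops after recording the inclusion and the rank equality, tacitly asserting that the plaquettes are a $\mathbb{Z}$-basis of the graphic lattice; your partial-sum formula $c_{ij}=\sum_{k\le i,\;l\le j}X_{kl}$ makes this step explicit by producing integer coordinates for every $X\in\Lambda$, which is the clean way to rule out a nontrivial finite index. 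Your uniform orientation $U\to W$ is also simpler than the alternating orientation in the paper's example and works just as well, since the resulting bipartite digraph is automatically acyclic and the flow-conservation equations at the two sides are exactly the vanishing row- and column-sum conditions.
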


\begin{proof}
  Recall that the Delone graph of $\mathsf{A}_m$ is the cycle graph
  $C_{m+1}$. A basis of $\mathsf{A}_m$ is
  \[
  b_1 = e_1 - e_2,\;  b_2 = e_2 - e_3, \; \ldots, \; b_m = e_m - e_{m+1},
\]
where $e_1, \ldots, e_{m+1}$ are the standard basis vectors of
$\mathbb{R}^{m+1}$. A basis of $\mathsf{A}_n$ is
$c_j = f_j - f_{j+1}$, with $j \in [n]$ where $f_1, \ldots, f_{n+1}$
are the standard basis vectors of $\mathbb{R}^{n+1}$. This defines the following
basis of $\mathsf{A}_m \otimes \mathsf{A}_n$
\begin{equation}
    \label{eq:basis}
  b_i \otimes c_j = e_i \otimes f_j - e_{i+1}
  \otimes f_j + e_{i+1} \otimes f_{j+1} - e_i \otimes f_{j+1} \quad \text{for } i \in [m],
  \; j \in [n].
  \end{equation}
  If one orients all the arcs $A$ of $\mathsf{K}_{m+1,n+1}$ from the
  left $m+1$ vertices to the right $n+1$ vertices then the
  basis~\eqref{eq:basis} lies in the graphic lattice
  $\mathcal{L}(\mathsf{K}_{m+1,n+1}) \cap \mathbb{Z}^A$ as it
  corresponds to the cycle with edges
  $(e_i,f_j), (f_j,e_{i+1}), (e_{i+1},f_{j+1}), (f_{j+1},e_i)$.  Since
  the dimension of the graphic space of the complete bipartite graph
  is
\[
\dim \mathcal{L}(\mathsf{K}_{m+1,n+1}) =  (m+1)(n+1) - (m+1+n+1) + 1 =
mn,
\]
we see that the basis~\eqref{eq:basis} also forms a basis of the graphic lattice.
\end{proof}

\begin{example}
  We give a basis of the graphic lattice
  $\mathsf{A}_2 \otimes \mathsf{A}_1$ corresponding to the complete
  bipartite graph $\mathsf{K}_{3,2}$.

\begin{minipage}{8cm}
\begin{eqnarray*}
 b_1 \otimes c_1 & = & e_1 \otimes f_1 -  e_2
                       \otimes f_1 + e_2 \otimes f_2  - e_1 \otimes f_2 \\
 b_2 \otimes c_1 & = & e_2 \otimes f_1 - e_3
  \otimes f_1 + e_3 \otimes f_2 - e_2 \otimes f_2 \\
\end{eqnarray*}
\end{minipage}
\hspace*{1cm}
\begin{minipage}{2.5cm}
\begin{tikzpicture}[scale=0.7]
\small
\draw [fill=gray] (0,0) circle (0.10);
\draw [fill=gray] (0,1) circle (0.10);
\draw [fill=gray] (0,2) circle (0.10);
\draw [fill=gray] (2,0.5) circle (0.10);
\draw [fill=gray] (2,1.5) circle (0.10);

\node[left] at (-0.1,0) {$1$};
\node[left] at (-0.1,1) {$2$};
\node[left] at (-0.1,2) {$3$};
\node[right] at (2.1,0.5) {$1$};
\node[right] at (2.1,1.5) {$2$};

\draw[thick] [->] (0.15,-0.05) -- (1.85,0.4);
\draw[thick][->] (0.15,0.05) -- (1.85,1.4);

\draw[thick][->] (0.15,0.95) -- (1.85,0.5);
\draw[thick][->] (0.15,1.05) -- (1.85,1.5);

\draw[thick][->] (0.15,1.95) -- (1.85,0.6);
\draw[thick][->] (0.15,2.05) -- (1.85,1.6);
\end{tikzpicture}
\end{minipage}
\end{example}

\section{Minimum mean cycle canceling algorithm for CVP}
\label{sec:algorithm}

In this section we show how to derive the following theorem from the
results of Karzanov and McCormick~\cite{Karzanov1997a}; see also
\cite{Karzanov1995a} for the conference version.

\begin{theorem}
  Let $M \in \{-1,0,+1\}^{n \times m}$ be a given totally unimodular
  matrix and let $g \in \mathbb{Q}^m_{>0}$ be a given positive,
  rational vector. By $\mathcal{L}$ we denote the kernel of $M$ which
  is a regular subspace. This defines the zonotopal lattice
  $L = \mathcal{L} \cap \mathbb{Z}^m$ with inner product
  $(\cdot, \cdot)_g$. Let $t \in \mathcal{L} \cap \mathbb{Q}^m$ be a
  given rational (target) vector. Then one can compute a lattice vector
  $u \in L$ with $\|u-t\| = \min_{v \in L} \|v-t\|$ in polynomial
  time.
\end{theorem}

For the proof define the separable convex function
$w : \mathbb{R}^m \to \mathbb{R}$ by
\[
w_i(v_i) = g_i (v_i - t_i)^2 \quad \text{so that} \quad w(v) =
\sum_{i=1}^m g_i(v_i - t_i)^2 = (v - t, v - t)_g.
\]
Then solving the closest vector problem for $L$ given the target
vector $t$ amounts to finding a minimizer for $w(v)$ among all lattice
vectors $v \in L$. So we can apply the results of Karzanov and McCormick
to solve the closest vector problem for zonotopal lattices.

The minimum mean cycle canceling method gives a polynomial time
algorithm for solving the closest vector problem here. To see this we
have to verify some technical conditions for $w$ which we will do now.

We describe how the minimum mean cycle canceling method works in our
setting and discuss which arguments of the paper of Karzanov and
McCormick have to be applied to prove that the algorithm runs in
polynomial time.

We start by setting up notation. The \textit{(discrete) right
  derivative} of $w_i$ is
\[
c_i^+(v_i) = w_i(v_i + 1) - w_i(v_i) = g_i(v_i + 1 - t_i)^2 - g_i(v_i -
t_i)^2 = g_i( 2(v_i - t_i) + 1).
\]
Similarly the \textit{(discrete) left derivative} of $w_i$ is
\[
  c_i^-(v_i) = w_i(v_i) - w_i(v_i - 1) = g_i( 2(v_i - t_i) - 1).
\]
Technically we replace the quadratic objective function $w$ by its
piecewise linear approximation at lattice points.

The \textit{cost} of the strict Voronoi vector $u \in L$ at a lattice vector
$v$ is
\[
c(v,u) = \sum_{i \in u^+} c_i^+(v_i) - \sum_{i \in u^-} c_i^-(v_i),
\]
where
\[
  u^+ = \{i \in \supp u : u_i = +1\} \quad \text{and} \quad
  u^- = \{i \in \supp u : u_i = -1\}.
\]
If the cost $c(v,u)$ is negative, then $v+u$ is closer to $t$
than $v$ because we have 
 \[
 (v+u-t, v+u-t)_g = (v-t, v-t)_g + c(v,u),
\]
which is easily verified.

The \textit{mean cost} of $u$ at $v$ is
 \[
   \overline{c}(v,u) = \frac{c(v,u)}{|\supp u|}.
\]
A strict Voronoi vector $u$ is called a \textit{minimum mean strict
  Voronoi vector} for $v$ if its mean cost $\overline{c}(v,u)$ is a
small as possible. The following quantity is used to measure the
progress of the algorithm:
\[
\lambda(v) = \max\left\{0, -\min_{u \in \Vor(L)} \overline{c}(v,u)\right\},
\]
where $\Vor(L)$ denotes the set of strict Voronoi vectors where we
used Proposition~\ref{strict Voronoi}. Now \cite[Lemma
3.1]{Karzanov1997a} says that $v$ is a solution of the closest vector
problem if and only if $\lambda(v) = 0$. \cite[Proof of Lemma
3.2]{Karzanov1997a} shows that the following linear program computes
$-\lambda(v)$:
\[
  \optprob{\text{min} & \sum\limits_{i=1}^m (c_i^+(v_i) x^+_i - c_i^-(v_i) x^-_i) \\[0.5ex]
    & M(x^+ - x^-) = 0\\[0.5ex]
    & e^{\sf T} (x^+ + x^-) = 1\\[0.5ex]
    & x^+, x^- \in \mathbb{R}^m_{\geq 0}
  },
\]
where $e = (1, \ldots, 1) \in \mathbb{R}^m$ is the all-ones vector.
One can furthermore find a minimum mean strict Voronoi vector at $v$
by first determining $\lambda(v)$ and then finding a vector
$u = x^+ - x^-$ with minimal support with
$\lambda(v) = -\overline{c}(v,u)$. This can be done by solving at most
$m$ auxiliary linear programs where one greedily probes to set
coordinates to $0$, which is possible because every lattice vector is
a conformal sum of primitive chains~\eqref{eq:conformal}.

Now the minimum mean cycle canceling algorithm works as follows: We
start at the origin $v = 0$. As long as $\lambda(v)$ is positive, we
improve $v$, moving it closer to the target vector $t$ by finding a
minimum mean strict Voronoi vector $u$ at $v$ and updating $v$ to
$v + \varepsilon u$. The step size $\varepsilon$ is determined by
\cite[(16)]{Karzanov1997a} which is the minimum integer $\Delta$ so
that
\begin{equation}
\label{eq:delta}
\Delta \in \left[\frac{\lambda(v)}{g_i}, \frac{\lambda(v)}{g_i}+1 \right]
\quad \text{for } i \in [m].
\end{equation}
Indeed, for instance if $u_i = +1$, then the bounds for $\Delta$ in
\cite[(16)]{Karzanov1997a} are
\[
  c_i^-(v_i + \Delta) \leq c_i^+(v_i) + \lambda(v) \leq c_i^+(v_i + \Delta),
\]
yielding in our setting the interval in~\eqref{eq:delta} which
clearly always contains an integer.

Choosing the step size like this makes sure that
$\lambda(v+\varepsilon u) \leq \lambda(v)$, see \cite[Lemma
3.3]{Karzanov1997a}. By \cite[Lemma 3.4]{Karzanov1997a} we see that
after $m-n$ iterations the value of $\lambda$ decreases by a factor of
at most $(1-1/(2m))$, so that we have a geometric decrease.

We start with $v = 0$ and we assume that $\lambda(0) > 0$. Let $u$ be
a minimum mean strict Voronoi vector at $0$. If $g_i$ and $t_i$ are
rational, it is immediate to see that the binary encoding length of
$\lambda(0)$ is polynomial in the input size. If $g_i$ and
$t_i$ are rational, then we can also derive a stopping criterion for
the algorithm. Because of rationality, there exists an integer $K$ so
that $K c(v,u)$ is an integer for all $v \in L$ and all strict Voronoi
vectors $u$. The binary encoding length of $K$ is polynomial in the
input size. If $\lambda(v) < \delta$ for $\delta < \frac{1}{Km}$ then
$v$ is a closest vector to $t$ because from \cite[Proof of Lemma
6.1]{Karzanov1997a} it follows that
\[
  c(v,u) > -m \delta \geq -\frac{1}{K}
 \]
 and so $c(v, u)$ is nonnegative. The bound on $\lambda(0)$, the
 stopping criterion together with the geometric decrease of $\lambda$
 show that only a polynomial number of iterations are needed to find a
 closest vector.

\section{Further remarks and questions}

After the first version of this paper was submitted to the arXiv
repository, Rico Zenklusen pointed us to the paper~\cite{Hochbaum1990a} by
Hochbaum and Shanthikumar. With their method it is possible, together
with results of Artmann, Weismantel, and Zenklusen \cite{Artmann2017a}, to
efficiently solve the closest vector problem for a larger class of
lattices, for example for lattices of the form
\[
 L = \{x \in \mathbb{Z}^n : Mx = 0\}
\]
where the matrix $M$ is totally bimodular, which means that every
subdeterminant is bounded by $2$ in absolute value. The combinatorics
and the geometry of the Voronoi cells of these kind of lattices is
currently unexplored.

Another very interesting open question is if detecting that a given
lattice is isometric to a zonotopal lattice is computationally
easy. To the best knowledge of the authors even for the class of
cographical lattices this is not known.

\section*{Acknowledgements}

We thank Rico Zenklusen for his valuable remarks. We like to thank the
reviewers for their constructive suggestions which helped to improve the
presentation of the paper.
 
This project has received funding from the European Union's Horizon
2020 research and innovation programme under the Marie
Sk\l{}odowska-Curie agreement No 764759. The fourth named author is partially
supported by the SFB/TRR 191 ``Symplectic Structures in Geometry,
Algebra and Dynamics'' and by the project ``Spectral bounds in
extremal discrete geometry'' (project number 414898050), both funded
by the DFG.

\end{document}